\documentclass[a4paper,11pt]{article}
\usepackage{pos}

\title{Moment Problems and Spectral Functions}

\usepackage{amsmath}
\usepackage{amsthm}
\usepackage[capitalize]{cleveref}

\newcommand{\R}{\mathbb{R}}
\renewcommand{\H}{\mathbb{H}}
\newcommand{\C}{\mathbb{C}}
\newcommand{\D}{\mathbb{D}}
\renewcommand{\P}{\mathbb{P}}

\let\Im\undefined

\DeclareMathOperator{\Im}{Im}

\newtheorem{theorem}{Theorem}
\newtheorem{lemma}{Lemma}

\author*[a]{Ryan Abbott}
\author[b]{William Jay}
\author[c]{Patrick Oare}

\affiliation[a]{Physics Department, Columbia University, New York, NY 10027, USA}

\affiliation[b]{Department of Physics, Colorado State University, Fort Collins, CO 80523, USA}

\affiliation[c]{Physics Department, Brookhaven National Laboratory, Upton, NY 11973, USA}

\emailAdd{rwa2110@columbia.edu}

\FullConference{The 42nd International Symposium on Lattice Field Theory (LATTICE2025)\\
2-8 November 2025\\
Tata Institute of Fundamental Research, Mumbai, India\\}

\abstract{
  Nevanlinna-Pick interpolation and moment problems use the analytic structures provided by causality in order to provide rigorous bounds on smeared spectral functions.
  This proceedings discusses Nevanlinna-Pick interpolation and moment problems and reviews some useful results,  including a simple proof that the space of causal data in Nevanlinna--Pick interpolation is convex.
}

\begin{document}
\maketitle

\section{Introduction}
Analytic continuation is ubiquitous in the physical sciences, especially in
lattice field theory. One instance of an analytic continuation problem
is that of reconstructing a spectral density from the Euclidean
lattice data, which requires numerically inverting the Laplace
transform
\begin{equation}
C_{E}(t) = \int dE \, \rho(E) e^{-E t},
\end{equation}
where $C_{E}(t)$ is a Euclidean-time lattice correlation function, and
$\rho(E)$ is its associated spectral density.
Several methods have been proposed to tackle this ill-posed inverse
problem, including the Backus-Gilbert~\cite{Backus:1968svk,Hansen:2017mnd} and
Hansen-Lupo-Tantalo~\cite{Hansen:2019idp} methods, the latter of which
has enabled several recent phenomenological applications~\cite{ExtendedTwistedMassCollaborationETMC:2022sta,Evangelista:2023fmt}, alongside many other proposed methods~\cite{Lawrence:2024hjm,Bailas:2020qmv,Bruno:2024fqc,Horak:2021syv,DelDebbio:2024lwm}.
One difficulty of these approaches is that both introduce a bias to make the problem tractable, resulting in hard-to-quantify systematic uncertainties.
Nevanlinna-Pick interpolation~\cite{Bergamaschi:2023xzx,PhysRevLett.126.056402} and moment
problems~\cite{Abbott:2025snz} have recently arisen as methods that
use the analytic structures imposed by causality in order to
derive rigorous bounds on smeared spectral functions, thus allowing for precisely quantified systematic uncertainties.
Similar methods based on Nevanlinna--Pick interpolation have also been recently been used with a somewhat different perspective in a condensed-matter context~\cite{Yu_2024,Huang:2022qsb,PhysRevLett.126.056402}.
This proceedings provides a review of both Nevanlinna-Pick interpolation and moment problems from a common perspective, as well as some further developments on applying these methods to lattice QFT correlators.
In particular, \cref{sec:review} reviews the methods, while
\cref{sec:geometry} provides a simple proof that the space of causal correlation functions is convex.

\section{Review of Analyticity Methods}\label{sec:review}
Both Nevanlinna--Pick (NP) interpolation and the Hamburger moment problem
are concerned with reconstructing information about a positive density
$\rho$, given some set of constraints on $\rho$ derived from
Euclidean-time data.
In the NP case, $\rho_{\text{NP}}$ is the usual spectral density, while in the
Hamburger case $\rho_{\text{Moment}}$ is the spectral density
associated to the transfer matrix; the two are formally related
by~\cite{Abbott:2025snz}
\begin{equation}
\label{eq:4}
\rho_{\text{NP}}(E) = \lambda \rho_{\text{Moment}}(\lambda)
\end{equation}
where $\lambda = e^{-aE}$ is the transfer-matrix eigenvalue associated
to the energy $E$. Throughout this work the two different spectral
densities $\rho_{\text{NP}}$ and $\rho_{\text{Moment}}$ will be
used interchangeably and denoted by $\rho$.
For simplicity, we consider only the Hamburger moment problem, which
corresponds to the case of a staggered-fermion, thermal correlation function (the Stieltjes and Hausdorff moment problems relevant for Wilson-like fermions are considered in Ref~\cite{Abbott:2025snz}).

Both methods provide information on $\rho$ via rigorous bounds on the
\emph{Stieltjes transform} $G(z)$, defined by
\begin{equation}
  \label{eq:stieltjes-def}
    G(z) = \int \frac{\rho(x) \, dx}{z - x + i 0}.
\end{equation}
Note that as a consequence of the positivity of $\rho$, the Stieltjes
transform of $\rho$ satisfies $\Im G(z) \geq 0$ whenever $\Im z \geq 0$, so $G$ can be viewed as a map $G : \H \to \H$, where $\H = \{z \in \C \mid \Im c \geq 0\}$ is the upper-half plane.
In the Nevanlinna-Pick interpolation, the Stieltjes transform is
physically interpreted as the analytic continuation of the correlation
function, and \cref{eq:stieltjes-def} is exactly the
K{\"a}ll{\'e}n-Lehmann spectral representation.
In either case
$G(z)$ can also be interpreted as a smeared spectral spectral function
due to the relation
\begin{equation}
\label{eq:imag-G-eq-cauchy}
  \frac{1}{\pi} \Im G(x_0 + i \epsilon)
  = \int dx \, \rho(x) K^{\text{Cauchy}}_{x_0, \epsilon}(x),
\end{equation}
which is valid for \emph{finite} $\epsilon$, and where $K^{\text{Cauchy}}_{x_0, \epsilon}(\omega)$ is the Cauchy
kernel defined by
\begin{equation}
  \label{eq:cauchy-kernel-defn}
  K^{\text{Cauchy}}_{x_0, \epsilon}(x)
  = \frac{1}{\pi} \Im \frac{1}{x - (x_0 + i \epsilon)}
  = \frac{1}{\pi} \frac{\epsilon}{(x - x_0)^2 + \epsilon^2}.
\end{equation}
As noted in Ref.~\cite{Bergamaschi:2023xzx}, this provides a
connection between analyticity-based methods and other approaches that
rely on reconstructing smeared spectral densities. 
In the limit $\epsilon \to 0$, the Cauchy kernel approaches $\delta(x
- x_0)$, resulting in the standard relation
\begin{equation}
  \label{eq:stieltjes-perron}
  \rho(x) = \lim_{\epsilon \to 0} \frac{1}{\pi} \Im G(x + i \epsilon)
\end{equation}
which is referred to as the Stieltjes--Perron inversion formula in
the mathematical literature.

The largest practical difference between Nevanlinna-Pick interpolation
and the Hamburger moment problem arises in how a given set of
Euclidean-time correlator data is transformed into the inputs of the
problem. Pick interpolation assumes that $G(z)$ has been evaluated at
a discrete set of (imaginary) Matsubara frequencies
$\{i\omega_{\ell}\}$ via the relation
\begin{equation}
G(i \omega_\ell) = \int_0^{\infty} C_E(t) e^{-i \omega_\ell t} \, dt
\end{equation}
where $C_E(t)$ is the Euclidean-time correlation function.
Note that from a literal interpretation of the preceding equation it seems this approach assumes the correlation function is known at all times
$t > 0$, whereas in most Monte-Carlo calculations the correlation
function is only known at discrete times $t \in \{0, \dots, N_t-1\}$.
One possibility is to use an interpolation to estimate the
energy-domain input data (see Ref.~\cite{Bergamaschi:2023xzx} for details).
In contrast, the Hamburger moment problem operates directly on the
Euclidean data by recognizing the correlation function as the moments of
the distribution:
\begin{equation}
C_E(t) \equiv C_t = \int d \lambda \, \lambda^t \rho(\lambda).
\end{equation}
One of the central mathematical results for both Nevanlinna--Pick
interpolation and the Hamburger moment problem is the presence of
matrix positivity conditions that are both necessary and sufficient
for solutions to exist.
In the case of Nevanlinna-Pick interpolation, the positivity condition
is most naturally formulated for interpolation problems on the unit
disk $\D = \{z \in \C \mid |z| \leq 1\}$,
for which the positivity condition is given by the following theorem:
\begin{theorem}[Pick~\cite{Pick1915}]
  \label{thm:pick-criterion}
Let $z_1, \dots, z_n, w_1, \dots, w_n \in \D$. Then there exists $f :
\D \to \D$ such that $f(z_i) = w_i$ for all $i \in \{1, \dots, n\}$ if
and only if the \emph{Pick matrix}
\begin{equation}
  \label{eq:pick-matrix-def}
  P_{ij} \equiv \frac{1 - w_i w_j^*}{1 - z_i z_j^*}
\end{equation}
is positive-definite.
\end{theorem}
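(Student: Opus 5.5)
The plan is to prove the two directions separately. Necessity will come from a reproducing-kernel (Hardy space) argument. Sufficiency will come from the Schur algorithm, i.e. induction on $n$ driven by Möbius normalisations. Two caveats about the statement as written. First, I will take $f$ holomorphic on the \emph{open} disk with $|f|\le 1$ and $|z_i|<1$, since the $z_i$ must lie in the domain of $f$ and $1-|z_i|^2>0$ is needed for $P$ to be defined. Second, I will prove the criterion with ``positive semidefinite''. Strict positive-definiteness fails for some admissible data: if the $w_i=B(z_i)$ come from a Blaschke product $B$ of degree $m<n$, then $P$ has rank $m$. So the definite version is the generic, nondegenerate case.

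\emph{Necessity.} Let $H^2$ be the Hardy space, with Szeg\H{o} kernel $k_w(z)=1/(1-z w^*)$, so that $\langle h,k_w\rangle=h(w)$. If $|f|\le 1$ on the open disk, multiplication by $f$ is a contraction $M_f$ on $H^2$, and $M_f^*k_w=f(w)^*k_w$. For $v=\sum_j c_j k_{z_j}$ we have $\|v\|^2-\|M_f^*v\|^2\ge 0$. Expanding with $\langle k_{z_j},k_{z_i}\rangle = 1/(1-z_i z_j^*)$ gives exactly $\sum_{i,j} c_i^* c_j P_{ij}\ge 0$, up to the harmless convention of which index carries the conjugate.

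\emph{Sufficiency by induction on $n$.} The case $n=1$ is trivial, since $P_{11}\ge0$ means $|w_1|\le1$. For the inductive step I first normalise using the Möbius maps $\varphi_a(w)=(w-a)/(1-a^*w)$, which satisfy
\begin{equation}
1-\varphi_a(w)\varphi_a(v)^*=\frac{(1-|a|^2)(1-wv^*)}{(1-a^*w)(1-av^*)}.
\end{equation}
Replacing $z_i\mapsto\varphi_{z_1}(z_i)$ and $w_i\mapsto\varphi_{w_1}(w_i)$ therefore changes $P$ only by a diagonal congruence and a positive scalar, so semidefiniteness is preserved. If $|w_1|=1$, then $P_{11}=0$ forces the first row of $P$ to vanish, so $|w_1|=1$ implies $w_i=w_1$ for all $i$ and $f\equiv w_1$ works. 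Otherwise I may assume $z_1=w_1=0$. Then the first row and column of $P$ are all ones, and the Schur complement is
\begin{equation}
P_{ij}-1=\frac{z_iz_j^*-w_iw_j^*}{1-z_iz_j^*}=z_i z_j^*\,\frac{1-u_iu_j^*}{1-z_iz_j^*},\qquad u_i=w_i/z_i,\quad i,j\ge2.
\end{equation}
Hence the Pick matrix for the $n-1$ data $(z_i,u_i)$ is congruent to this Schur complement and is positive semidefinite. In particular its diagonal gives $|u_i|\le1$. By induction there is a $g$ with $|g|\le1$ and $g(z_i)=u_i$, and $f(z)=z\,g(z)$ interpolates all $n$ points. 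Finally I undo the Möbius maps, using $\varphi_{w_1}^{-1}\circ f\circ\varphi_{z_1}$.

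The main obstacle is the bookkeeping in this reduction step. One must check that the congruence really preserves semidefiniteness, including the conjugation conventions. One must also handle the degenerate cases: some $|u_i|=1$ at an intermediate stage, and distinct $z_i$ carrying repeated data. In those cases the same ``vanishing row'' argument forces $g$ to be a unimodular constant, or a finite Blaschke product in the unwound form. It is exactly this degeneracy that distinguishes the semidefinite statement from the strictly definite one in \cref{thm:pick-criterion}.
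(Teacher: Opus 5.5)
The paper gives no proof of this theorem. It quotes Pick's result as a black box, so there is no in-paper argument to compare against. On its own terms your proposal is the standard proof: a reproducing-kernel, contractive-multiplier argument for necessity, and the Schur--Nevanlinna recursion for sufficiency. It is correct. The key computations check out.
\begin{itemize}
\item With the inner product linear in the first slot, $\|v\|^2-\|M_f^*v\|^2=\sum_{i,j}c_i^*c_j(1-w_iw_j^*)/(1-z_iz_j^*)$.
\item The M\"obius step replaces $P$ by $cDPD^*$, where $c=(1-|w_1|^2)/(1-|z_1|^2)>0$ and $D=\mathrm{diag}(d_i)$ with $d_i=(1-z_1^*z_i)/(1-w_1^*w_i)$. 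Each $d_i$ is finite because $|w_1|<1$ and $|w_i|\le 1$.
\item After normalisation the Schur complement is $ZQZ^*$, with $Z=\mathrm{diag}(z_2,\dots,z_n)$ invertible.
\end{itemize}

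Your caveats are not pedantry; they are exactly what the paper needs. The statement as printed omits holomorphy of $f$. Without it an interpolating map always exists, and the ``only if'' direction is false. Strict definiteness also makes the ``only if'' direction false, since a unimodular constant $f$ gives $P=0$. More to the point, the convexity lemma applies the theorem to arbitrary $w^{(0)},w^{(1)}\in\P_{\vec z}$, a set defined by $\succeq 0$. The remark after that lemma is about precisely the singular boundary points. So the paper is silently using your semidefinite version. It also needs $|z_i|<1$ so that $P_{ii}$ exists, even though the paper's $\D$ is closed.

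Two pieces of bookkeeping should be stated explicitly.
\begin{itemize}
\item \emph{Repeated nodes.} Distinctness of the nodes enters only through $u_i=w_i/z_i$, which needs $z_i\neq z_1$. So the degenerate case is repeated nodes, not ``distinct $z_i$ carrying repeated data''. Dispose of it at the outset. If $z_i=z_j$, the $2\times2$ principal minor of $P$ on $\{i,j\}$ equals $-|w_i-w_j|^2/(1-|z_i|^2)^2$. Hence $P\succeq0$ forces $w_i=w_j$, and the duplicate can be dropped.
\item \emph{Unimodular values.} These need no extra case at later stages. If $|w_k|=1$ for any $k$, the vanishing of row $k$ forces $w_j=w_k$ for all $j$. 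So at every stage the data are either a single unimodular constant or lie strictly inside the disk. Your first case already covers this.
\end{itemize}

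As an optional improvement, running the recursion backwards gives an elementary necessity proof.
\begin{itemize}
\item After normalisation any interpolant has $f(0)=0$, so Schwarz's lemma gives $f=zg$ with $g(z_i)=u_i$.
\item Induction then gives $Q\succeq0$, and therefore $P=\mathbf{1}\mathbf{1}^*+(0\oplus S)\succeq0$.
\item For distinct nodes and singular $P$, the same argument shows that each step is forced and the recursion ends in a unimodular constant. So the interpolant is unique.
\end{itemize}
This is the uniqueness the paper asserts for boundary points of $\P_{\vec z}$ without proof.
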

Although this theorem does not directly apply to the physical interpolation problems on $\H$, they can be transformed into an interpolation problem on $\D$ via suitable conformal maps; see Refs.~\cite{Bergamaschi:2023xzx,PhysRevLett.126.056402,Nogaki:2023mut} for details.
Meanwhile, for the Hamburger moment problem, the positivity condition
is stated directly in terms of the correlation function:
\begin{theorem}[Hamburger~\cite{Hamburger:1920}]
Let $C_0, \dots, C_{n-1} \in \R$. Then there exists $\rho \geq 0$ with
moments $\{C_t\}$ if and only if the Hankel matrix
\begin{equation}
  \label{eq:hankel-matrix-def}
  H_{ij} \equiv (C_{i + j})_{i,j=0}^{n-1}
  = \begin{pmatrix}
	C_0 & C_1 & C_2 & \dots \\
	C_1 & C_2 & C_3 & \dots \\
	C_2 & C_3 & C_4 & \dots \\
	\vdots &  \vdots &  \vdots & \ddots
  \end{pmatrix}
\end{equation}
is positive-definite.
\end{theorem}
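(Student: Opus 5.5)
We read the statement in its standard truncated form. The entries of $H$ involve $C_0,\dots,C_{2n-2}$. The claim is that there is a positive measure $\rho$ with $\int \lambda^t \rho(\lambda)\,d\lambda = C_t$ for $0\le t\le 2n-2$ if and only if $H$ is positive-definite. For necessity we read "positive-definite" as nondegenerate: this requires $\rho$ not to be supported on fewer than $n$ points, since otherwise $H$ is only positive semidefinite. The plan is to prove the two directions separately. Necessity is a one-line quadratic-form computation. Sufficiency is an explicit construction of a finitely-atomic $\rho$ by Gaussian quadrature, using orthogonal polynomials and a Jacobi matrix.

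\emph{Necessity.} For $v\in\R^n$ set $p_v(\lambda)=\sum_i v_i\lambda^i$. Then
\begin{equation*}
v^T H v=\sum_{i,j} v_i v_j C_{i+j}=\int |p_v(\lambda)|^2\rho(\lambda)\,d\lambda\ \ge 0 .
\end{equation*}
Equality forces $\rho$ to be supported on the at most $n-1$ zeros of $p_v$. So under the nondegeneracy assumption $H$ is positive-definite.

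\emph{Sufficiency.} Let $\mathcal{P}_{n-1}$ be the real polynomials of degree at most $n-1$. Define the inner product $\langle \lambda^i,\lambda^j\rangle = C_{i+j}$ on $\mathcal{P}_{n-1}$, extended bilinearly; it is an inner product precisely because $H>0$. Gram--Schmidt on $1,\lambda,\dots,\lambda^{n-1}$ gives orthonormal polynomials $p_0,\dots,p_{n-1}$ with positive leading coefficients. To close the three-term recurrence at the top we need one more moment $C_{2n-1}$, which is not constrained by the data. We choose it arbitrarily (say $C_{2n-1}=0$), so that the pairing $\langle \lambda p_i,p_j\rangle$ is defined for all $i,j\le n-1$. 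Multiplication by $\lambda$ compressed to $\mathcal{P}_{n-1}$ is then represented in the basis $\{p_k\}$ by the matrix $J_{ij}=\langle \lambda p_i,p_j\rangle$. This $J$ is real symmetric, and it is tridiagonal because $\lambda p_i$ has degree $i+1$ and is orthogonal to every $p_j$ with $j<i-1$.

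Diagonalise $J=\sum_{k=1}^n x_k u_k u_k^T$ with real eigenvalues $x_k$ and orthonormal eigenvectors $u_k$. Define
\begin{equation*}
\rho=\sum_k w_k\,\delta(\lambda-x_k),\qquad w_k=C_0\,(u_k)_0^2\ge 0 .
\end{equation*}
Then $\int\lambda^t\rho\,d\lambda = C_0\, e_0^T J^t e_0$, so it remains to show $C_0\, e_0^T J^t e_0 = C_t$ for $t\le 2n-2$. For $a\le n-1$, the polynomial $\lambda^a p_0$ lies in $\mathcal{P}_{n-1}$. Hence truncating to $\mathcal{P}_{n-1}$ after each multiplication never discards anything, and $J^a e_0$ is exactly the coordinate vector of $\lambda^a p_0$. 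Writing $t=a+b$ with $a,b\le n-1$ and using the symmetry of $J$,
\begin{equation*}
e_0^T J^t e_0=(J^a e_0)^T(J^b e_0)=\langle \lambda^a p_0,\lambda^b p_0\rangle = C_{a+b}/C_0 ,
\end{equation*}
where the last step uses $p_0 = C_0^{-1/2}$.

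The main obstacle is this bookkeeping: one must check carefully that the compression of multiplication by $\lambda$ loses no information at any power needed. The splitting $t=a+b$ with $a,b\le n-1$ is exactly what avoids ever applying $J$ to a vector whose image would leave $\mathcal{P}_{n-1}$. Two remaining points are routine. First, the weights are strictly positive, since $(u_k)_0\neq 0$ for an unreduced tridiagonal matrix. Second, the nodes $x_k$ are distinct, and they are the zeros of the monic degree-$n$ orthogonal polynomial $\det(\lambda-J)$.
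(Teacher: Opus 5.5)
The paper gives no proof of this statement: it quotes it as a classical result with a citation to Hamburger. Your argument is therefore an independent route, and it is correct.

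Your reading of necessity is the right one. As printed, ``only if positive-definite'' fails for measures with fewer than $n$ atoms; for example, $\rho=\delta(\lambda-1)$ gives the rank-one all-ones $H$. Your quadratic-form identity gives semidefiniteness in general, and definiteness exactly when $\rho$ has at least $n$ support points. Conversely, the strict positivity you use for sufficiency is essential. The data $C_0=C_1=0$, $C_2=1$ have $H\succeq 0$ but no representing measure, which is worth noting given the $\succeq$ in the paper's $\P_H$.

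The sufficiency bookkeeping is sound. With $L(\lambda^k)=C_k$ for $k\le 2n-1$, one has $J_{ij}=L(\lambda p_i p_j)$, so $J$ is symmetric and only $J_{n-1,n-1}$ depends on the free moment. The compressed multiplication is exact on $\mathcal{P}_{n-2}$, and the split $t=a+b$ closes the argument. Strict positivity of the weights and simplicity of the nodes are true but not needed, since $w_k\ge 0$ already gives $\rho\ge 0$.

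The paper instead appeals to the classical theory. Hamburger's theorem proper concerns infinite sequences, and its standard proofs (continued fractions, or positive-extension and spectral-theorem arguments) in return handle the infinite problem and describe all solutions. Your Gaussian-quadrature proof is finite-dimensional and constructive. It outputs an explicit $n$-atomic solution, and varying $C_{2n-1}$ exhibits a one-parameter family of them, which makes concrete the non-uniqueness that the paper's bounds quantify.

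If $C_{2n-1}$ is part of the data, keeping its true value reproduces it as well, because $C_0\,e_0^TJ^{2n-1}e_0=C_0\,L(\lambda q^2)=C_{2n-1}$ with $q=\lambda^{n-1}p_0$. So even-length data are covered too. With that choice your $J$ is exactly the tridiagonal matrix the Lanczos method builds from $C_0,\dots,C_{2n-1}$, which ties your proof to the Lanczos approach cited in the paper's outlook.
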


\section{Geometry of Causal Subsets}\label{sec:geometry}
In the mathematical literature, the interpolation data is typically assumed to have infinite precision with no uncertainty. In contrast, lattice field theory calculations are typically Monte-Carlo calculations, which provide noisy Euclidean data at finite precision with comparatively large uncertainties.
In this context it is important to consider not only whether a given, fixed set of data admits a causual interpolation, but further
how the set of causally-consistent data are distributed throughout the
space of possible Euclidean data. In this direction, suppose that a
set of interpolation nodes $\vec{z} = (z_1, \dots, z_n) \in \D^n$ is
exactly known and fixed,\footnote{This is true the case for
applications of Pick interpolation to lattice field theory since the
interpolation nodes are derived from the Matsubara frequencies, which
are analytically known~\cite{Bergamaschi:2023xzx,PhysRevLett.126.056402}.} and define
the \emph{Pick space} $\P_{\vec{z}}$ to be the set of all $\vec{w} \in
\D^n$ satisfying the Pick criterion, i.e.
\begin{equation}
    \P_{\vec{z}} = \left\{
    (w_1, \dots, w_n) \,\left|\, \frac{1 - w_i w_j^*}{1 - z_i z_j^*} \succeq 0
    \right.
    \right\}.
\end{equation}
Similarly, in the case of the Hamburger moment problem one may define
an analagous space
\begin{equation}
    \P_H = \left\{(C_0, \dots, C_{n-1}) \,\left|\, (C_{i+j})_{i,j=0}^{(n-1)/2} \succeq 0 \right. \right\}.
\end{equation}
A key fact underlying the geometry of both problems is that their
causal spaces are convex. For $\P_H$, convexity is
straightforward to see, since the space of positive-definite Hankel
matrices is a linear subspace of the set of positive-definite
matrices, and the set of positive-definite matrices is convex. 
The convexity of PSD Hankel matrices has recently been exploited elsewhere in the physics literature on spectral functions in Ref.~\cite{Yu_2024}.
For $\P_{\vec{z}}$, the convexity is no longer as apparent due to the fact
that the Pick matrix is not linear in the inputs $\{w_i\}$.
However, with a slight change of perspective, it is straightforward to
prove that $\P_{\vec{z}}$ is convex for any choice of $\vec{z}$,
resulting in the following lemma:
\begin{lemma}
For any fixed set of interpolating nodes $\{z_i\}$, the Pick space
$\P_z$ is convex.\footnote{
To the authors' knowledge, this result appears to be novel in the physics literature. 
However, given the depth of literature on these topics and the simplicity of the proof, it seems likely that such a result has been known previously.
Convexity is alluded to in the mathematical literature, e.g., in Ref.~\cite{Nicolau2016}.
}
\end{lemma}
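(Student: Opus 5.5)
Rather than working with the Pick matrix, which is quadratic in $\vec{w}$, we use \cref{thm:pick-criterion} to recast $\P_{\vec{z}}$ as the image of a convex set under a linear map. Let $\mathcal{S}$ be the set of analytic maps $f : \D \to \D$ on the interior of the disk, the Schur class. By \cref{thm:pick-criterion}, $\vec{w} \in \P_{\vec{z}}$ exactly when there is an $f \in \mathcal{S}$ with $f(z_i) = w_i$ for all $i$. Hence
\begin{equation}
  \P_{\vec{z}} = \mathrm{ev}_{\vec{z}}(\mathcal{S}), \qquad \mathrm{ev}_{\vec{z}}(f) = (f(z_1), \dots, f(z_n)).
\end{equation}
The evaluation map $\mathrm{ev}_{\vec{z}}$ is linear on the vector space of analytic functions on the open disk, and a linear image of a convex set is convex. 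It therefore suffices to show that $\mathcal{S}$ is convex.

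This is immediate. Take $f, g \in \mathcal{S}$ and $t \in [0,1]$. Then $h = t f + (1-t) g$ is analytic, and for every $z$
\begin{equation}
  |h(z)| \le t |f(z)| + (1-t)|g(z)| \le 1 ,
\end{equation}
so $h \in \mathcal{S}$. Concretely, given $\vec{w}, \vec{w}' \in \P_{\vec{z}}$ with interpolants $f$ and $g$, the function $t f + (1-t) g$ interpolates $t\vec{w} + (1-t)\vec{w}'$, so this point also satisfies the Pick criterion.

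The main obstacle is bookkeeping rather than substance. The interpolation nodes must lie in the open disk, where $f$ is analytic, so that the evaluation map makes sense. The boundary cases (degenerate Pick matrices, $|w_i| = 1$) must also be handled consistently with the semidefinite condition $\succeq 0$ used to define $\P_{\vec{z}}$. The positive semidefinite version of Pick's theorem is the one that characterizes existence of a Schur interpolant, including the constant unimodular and Blaschke-product solutions, so we invoke that version. With this choice the argument above goes through unchanged.

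If one prefers a purely matrix-theoretic check, an alternative is to write $1 - w_i w_j^*$ through the kernel $\frac{1}{1 - z_i z_j^*}$. This is harder, because the map $\vec{w} \mapsto P$ is not affine, so the function-space argument is the one we use.
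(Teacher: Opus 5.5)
Your proposal is correct and is essentially the paper's proof: both use Pick's theorem in both directions to pass from data to interpolating Schur functions $f_0, f_1$, note that $(1-\lambda) f_0 + \lambda f_1$ still maps into the (convex) disk, and observe that it interpolates the convex combination of the data. Your remark that one must use the positive-semidefinite form of Pick's theorem is a worthwhile clarification the paper glosses over, since its $\P_{\vec{z}}$ is defined with $\succeq 0$ while its stated theorem says positive-definite.
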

\begin{proof}
Let $w^{(0)},  w^{(1)} \in \P_z$, and let $\lambda \in [0, 1]$ be
arbitrary. Then it suffices to prove that the convex linear
combination
\begin{equation}
  \label{eq:w-lambda-def}
w^{(\lambda)} = (1 - \lambda) w^{(0)} + \lambda w^{(1)}
\end{equation}
is contained within $\P_z$ as well.
By \cref{thm:pick-criterion} there exist $f_0, f_1 : \D \to \D$ such
that
\begin{equation}
  \label{eq:f-interp-boundary}
  \begin{aligned}
	f_0(z_i) &= w^{(0)}_i \\
	f_1(z_i) &= w^{(1)}_i
  \end{aligned}
\end{equation}
for both $i$. Define a function $f_{\lambda} : \D \to \D$
by\footnote{Note that the image of $f_{\lambda}$ lies within $\D$
since the unit disk is convex.}
\begin{equation}
  \label{eq:f-interp-lambda}
    f_\lambda(z) = (1 - \lambda) f_0(z) + \lambda f_1(z).
\end{equation}
Combining
\cref{eq:f-interp-boundary,eq:w-lambda-def,eq:f-interp-lambda} then
yields the result
\begin{equation}
\label{eq:1}
f_{\lambda}(z_i) = w^{(\lambda)}_i
\end{equation}
for both $i$, i.e. $f_{\lambda}$ interpolates $w^{(\lambda)}$, and
consequently $w^{(\lambda)} \in \P_z$ by \cref{thm:pick-criterion} as
expected.
\end{proof}
The proof here is quite general, and applies to Nevanlinna-Pick
interpolation on any domain, as long as the target domain is
convex. In particular, this implies that the original interpolation
problem on $\H$ has the same convexity properties, and also provides
an alternative proof that $\P_H$ is convex provided that the Hamburger
moment problem is reinterpreted as an interpolation problem at infinity (see
Ref.~\cite{Abbott:2025snz} as well as the original mathematical literature \cite{Kovalishina1984}).
Another immediate consequence of convexity is that the boundary of the Pick space is exactly the extremal interpolation problems, i.e. problems for which the Pick
matrix is singular and the interpolating function is unique.

\section{Outlook}\label{sec:conclusion}
Causality and analyticity provide powerful
constraints both on Euclidean-time correlation functions and the
real-time information derived from them.
Nevanlinna--Pick interpolation along with moment problems provide
direct methods of accessing these constraints in the case of exact data.
Given the promising results in Ref.~\cite{Fields:2025glg} in applying NP interpolation to uncertain data, as well success of the closely-related Lanczos method applied to
noisy correlation functions (see Refs.~\cite{Hackett:2024xnx,Hackett:2024nbe,Wagman:2024rid,Abbott:2025yhm}),
it seems plausible that the same rigorous bounds will soon be achievable for practical lattice field theory calculations.
More work is still needed, but extensions of these methods 
may eventually be able to be used to construct bounds incorporating the full information content of lattice correlators. 
An important question then becomes which set of correlators can be used to establish the strongest bounds.
In that direction, one aspect of moment problems worth emphasizing
is the ability to use matrix correlators to constrain spectral
functions (see Ref.~\cite{Abbott:2025snz}).
Given the widespread use of matrix correlators for extracting spectral
information across many systems (e.g. via a generalized eigenvalue problem), using multiple
operators for extracting spectral functions will likely be crucial in
reducing uncertainties and increasing precision of future
calculations, as well as enabling new calculations that otherwise would not have been possible.

\section*{Acknowledgements}
We thank Norman Christ, Sarah Fields, and Matteo Saccardi
for useful discussions and comments on the manuscript.
RA is an Ernest Kempton Adams Postdoctoral Fellow supported in part by the Ernest Kempton Adams fund for Physical Research of Columbia University
and in part by U.S. DOE grant No. DE-SC0011941
PO is supported in part by the U.S. Department of Energy, Office of Science, Office of Nuclear Physics under grant Contract Number DE-SC0012704 (BNL).

\bibliographystyle{JHEP}
\bibliography{main.bib}

\end{document}